\documentclass[11pt,a4paper]{article}
\usepackage[utf8]{inputenc}
\usepackage[T1]{fontenc}
\usepackage{amsmath,amssymb,amsthm}
\usepackage[a4paper,margin=2.5cm]{geometry}
\usepackage[authoryear,round]{natbib}
\usepackage{graphicx}
\usepackage[colorlinks=true,urlcolor=blue,citecolor=black,linkcolor=black]{hyperref}
\newtheorem{theorem}{Theorem}[section]
\newtheorem{remark}[theorem]{Remark}
\newcommand{\sech}{\operatorname{sech}}
\newcommand{\rd}{\mathrm{d}}
\title{Inverse construction of exact atmospheric models and the 3.7--4.4\,mHz acoustic resonances}
\author{O.\,V.~Kaptsov\\[2pt]
\small Federal Research Center for Information and Computational Technologies, Novosibirsk, Russia\\
\small \texttt{profkap@gmail.com}, ORCID 0000-0002-9562-9092}
\date{}

\begin{document}
\maketitle

\begin{abstract}
Large explosive eruptions and earthquakes excite atmospheric oscillations near 3.7 and 4.4\,mHz that couple to
the solid Earth. For vertically propagating linear acoustic--gravity waves in a hydrostatic ideal-gas atmosphere,
the travel-time formulation gives a Klein--Gordon equation with a potential $Q(\tau)$. We establish three main results.
First, for every potential admitting a positive zero-energy solution, the corresponding atmospheric profiles are
constructed explicitly by quadrature; this extends the known reflectionless profiles of constant $Q$. Second,
Darboux--Crum potentials give a smooth atmosphere that reproduces the U.S. Standard Atmosphere between 11 and
100\,km to within 3\,K and is exactly transparent to vertical waves above its cut-off; it supports no damped overtone near 4.4\,mHz,
and its ground-confined mode below the cut-off is perfectly trapped. The observed spectrum therefore requires a
cut-off barrier of finite thickness connecting the lower atmosphere to the thermosphere. Third, a Rosen--Morse barrier gives an exactly solvable atmosphere with a closed-form
hypergeometric resonance condition. Its resonances at 3.76 and 4.37\,mHz, with leakage quality factors 173 and 20,
explain the different damping of the two oscillations: the fundamental lies below the top of the mesospheric
cut-off barrier and leaks by tunnelling, whereas the overtone lies above it. Full-wave computations for
event-specific atmospheres reproduce the reported 3.7--3.8\,mHz fundamental oscillation in all six events, of
seven considered, in which such a mode was identified.
\end{abstract}

\section{Introduction}\label{sec:intro}

The 1991 eruption of Mount Pinatubo produced harmonic Rayleigh waves with a bichromatic spectrum: a
dominant peak at 4.44\,mHz and a weaker one at 3.68\,mHz \citep{KM1992,WZ1992}. The 1982 eruption of
El Chich\'on produced peaks at 3.70 and 5.14\,mHz \citep{WZ1992}. The 2022 eruption of Hunga
Tonga--Hunga Ha'apai excited predominantly the spheroidal mode ${}_0S_{29}$ (3.72\,mHz), about eleven times
more strongly than Pinatubo, and near-field barometers recorded the atmospheric oscillation itself at
3.71--3.73\,mHz with a quality factor of about 112 \citep{Ringler2023}. The accepted interpretation
\citep{KMH1994,Lognonne1998,WK2010} is that acoustic modes trapped between the ground and the lower
thermosphere are excited near the source and couple to Rayleigh waves of the same frequency and horizontal
wavelength. Numerical normal-mode computations for a spherical Earth with an atmosphere give the
fundamental and first overtone near 3.68 and 4.40\,mHz with quality factors 115 and 21
\citep{Lognonne1998}; \citet{Tahira1995} modelled the fundamental as a standing wave with an
evanescent upper boundary and a wavelength about four times the height of the mesopause.

Independently, there is a line of work on reflectionless propagation of waves in strongly inhomogeneous
media \citep{PPB2011,PPT2012,PPT2023}. For vertically propagating acoustic waves
in a gravitating atmosphere, a change of variables reduces the wave equation to a Klein--Gordon equation
with constant coefficients provided the temperature profile satisfies a nonlinear ordinary differential
equation, whose integration gives a family of transparent atmospheres \citep{PPT2012}.

This paper connects the two topics. Its purpose is not a complete global Earth--atmosphere normal-mode
model, but to isolate the vertical atmospheric resonator and to determine analytically which part of the
observed spectrum it controls; sphericity, winds, variable composition and coupling to the elastic Earth are
left aside. We construct atmospheres explicitly from the potential (\S\,\ref{sec:inverse}), show that a transparent
atmosphere reproducing the middle atmosphere has a perfectly trapped fundamental but no leaky overtone
(\S\,\ref{sec:transparent}), solve exactly a model
with a mesospheric barrier (\S\,\ref{sec:resonance}), estimate corrections (\S\,\ref{sec:corrections}), and compare
with event-specific computations and observations (\S\S\,\ref{sec:events}--\ref{sec:discussion}). The central message is
that the vertical acoustic spectrum is controlled by the scattering structure of the atmospheric cut-off, not simply
by the presence of a realistic temperature stratification.

\section{Formulation}\label{sec:form}

Let $z$ point upwards, $p_0'=-\rho_0 g$, $c^2=\gamma p_0/\rho_0=\gamma RT/M$. The linearised inviscid equations are
\begin{equation}\label{eq:lin}
\rho_0\boldsymbol{v}_t=-\nabla p-\rho g\,\boldsymbol{e}_z,\qquad \rho_t+\nabla\cdot(\rho_0\boldsymbol{v})=0,\qquad
p_t+w\,p_0'=c^2(\rho_t+w\rho_0').
\end{equation}
For horizontally uniform motion, elimination of $\rho$ and $p$ gives exactly, within the linearised inviscid model,
$\rho_0w_{tt}=(\rho_0c^2w_z)_z$, i.e.
\begin{equation}\label{eq:vert}
w_{tt}=c^2(z)\,w_{zz}-\gamma g\,w_z ,
\end{equation}
with $p_t=-\rho_0c^2w_z+\rho_0gw$. With the travel time $\tau=\int\rd z/c$ (dots denote $\rd/\rd\tau$) and
$a=(\dot c+\gamma g)/c$, the substitution $w=u\exp(\frac12\int a\,\rd\tau)$ yields
\begin{equation}\label{eq:KG}
u_{tt}=u_{\tau\tau}-Q(\tau)\,u,\qquad Q=\tfrac14a^2-\tfrac12\dot a .
\end{equation}
For an isothermal atmosphere $Q=\omega_a^2=(\gamma g/2c)^2$ is the squared acoustic cut-off frequency; in
general $Q(\tau)^{1/2}$ provides a local cut-off scale, and $Q$ governs reflection and trapping.

\section{Reconstruction of the atmosphere from the potential}\label{sec:inverse}

Throughout this section $g$ and $\gamma$ are constant (the variation of $g$ is below 3\,\% beneath 100\,km and is
included in the numerical computations of \S\S\,\ref{sec:resonance}--\ref{sec:events}), and an atmosphere is a
hydrostatic ideal-gas profile with sound speed $c\in C^2(J)$, $c>0$, on an interval $J$ of heights. Since $c>0$,
$\tau(z)=\int\rd z/c$ is a strictly increasing $C^3$ map of $J$ onto an interval $I$, and $c$ may be regarded as a
function of $\tau\in I$.

\begin{theorem}\label{thm:inverse}
(i) Let $Q\in C(I)$, let $\psi\in C^2(I)$ be a positive solution of $\ddot\psi=Q\psi$, let $\tau_*\in I$ be a point
at which $\int_\tau^{\tau_*}\psi^2\,\rd\tau'$ is finite for all $\tau\in I$, and let $C\in\mathbb{R}$ be such that
\begin{equation}\label{eq:inverse}
c(\tau)=\frac{C+\gamma g\int_\tau^{\tau_*}\psi^2\,\rd\tau'}{\psi^2(\tau)}>0\quad(\tau\in I),\qquad z(\tau)=\int c\,\rd\tau .
\end{equation}
Then $c\in C^2(I)$, $z$ is a strictly increasing map of $I$ onto an interval $J$, the potential (\ref{eq:KG}) of the
atmosphere $c(z)$ equals $Q$, and $w=u/\psi$ solves (\ref{eq:vert}) for every solution $u$ of (\ref{eq:KG}).

(ii) Conversely, for every atmosphere in the above sense with $c\in C^2$, the function
$\psi=\exp(-\frac12\int a\,\rd\tau)$ is positive and satisfies $\ddot\psi=Q\psi$, and $c$ is given by (\ref{eq:inverse}) with this
$\psi$, any $\tau_*\in I$ and $C=c(\tau_*)\psi^2(\tau_*)$.

(iii) The overall scale of $\psi$ is immaterial, since the simultaneous transformation $\psi\mapsto A\psi$,
$C\mapsto A^2C$ leaves $c(\tau)$ unchanged. For a given $Q$, the admissible atmospheres are parametrised by the positive
solutions of $\ddot\psi=Q\psi$, modulo this scaling, together with the integration constant $C$.
\end{theorem}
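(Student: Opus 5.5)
The whole proof rests on one identity linking $c$, $a$ and $\psi$. Define $\psi$ through $a$ by $a=-2\dot\psi/\psi$, i.e.\ $\psi=\exp(-\frac12\int a\,\rd\tau)$. For any $C^1$ function $a$ this gives
\[
\ddot\psi=\psi\bigl(\tfrac14a^2-\tfrac12\dot a\bigr),
\]
so $\ddot\psi=Q\psi$ with $Q$ as in (\ref{eq:KG}). This is just the Riccati--Schr\"odinger correspondence. The other half is the definition $a=(\dot c+\gamma g)/c$, which rewrites as the linear first-order equation $\dot c-ac=-\gamma g$. Substituting $a=-2\dot\psi/\psi$ turns it into
\[
\frac{\rd}{\rd\tau}\bigl(\psi^2c\bigr)=-\gamma g\,\psi^2 .
\]
Integrating from $\tau$ to $\tau_*$ gives exactly (\ref{eq:inverse}). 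I would prove both directions from these two identities.

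For (ii), start from an atmosphere with $c\in C^2$. Then $a\in C^1$, and $\psi=\exp(-\frac12\int a)>0$ is $C^2$. The first identity gives $\ddot\psi=Q\psi$. The second, integrated between $\tau$ and an arbitrary $\tau_*\in I$ (finite, since both points lie in $I$ and $\psi$ is continuous), gives (\ref{eq:inverse}) with $C=c(\tau_*)\psi^2(\tau_*)$.

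For (i), reverse the steps.
\begin{itemize}
\item Regularity. The numerator $N(\tau)=C+\gamma g\int_\tau^{\tau_*}\psi^2$ is $C^3$, because $\psi^2\in C^2$. Hence $c=N/\psi^2$ is $C^2$. Since $c>0$ by hypothesis, $z=\int c\,\rd\tau$ is $C^3$ and strictly increasing, so it maps $I$ homeomorphically onto an interval $J$. Its inverse is $\tau(z)=\int\rd z/c$, consistent with the definition of travel time.
\item Recovering $a$. We have $\psi^2c=N$ and $\dot N=-\gamma g\psi^2$, so $\frac{\rd}{\rd\tau}(\psi^2c)=-\gamma g\psi^2$. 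Expanding, $2\psi\dot\psi c+\psi^2\dot c=-\gamma g\psi^2$. Dividing by $\psi^2c$ gives $a:=(\dot c+\gamma g)/c=-2\dot\psi/\psi$.
\item Recovering $Q$. Then $\tfrac14a^2-\tfrac12\dot a=\ddot\psi/\psi=Q$, by the same computation as before read backwards.
\item Solutions. The substitution in \S\ref{sec:form} is $w=u\exp(\frac12\int a)$, and $\exp(\frac12\int a)=1/\psi$ up to a constant, which is absorbed into $u$. So $w=u/\psi$ solves (\ref{eq:vert}), taking as given the reduction (\ref{eq:vert})$\to$(\ref{eq:KG}) stated in \S\ref{sec:form}.
\end{itemize}

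The main point needing care is regularity rather than algebra. In (i), $Q$ is only continuous, yet $c$ must be $C^2$ so that the potential of the atmosphere $c(z)$ is defined in the paper's sense. This works because $c$ involves only $\psi^2$ and its integral, and $\psi\in C^2$. One should also check that $\dot a$ exists. This follows from $a=-2\dot\psi/\psi$ with $\psi\in C^2$, so no derivative of $Q$ is ever needed. It is also worth noting that $\tau_*$ may be an endpoint of $I$, possibly at infinity, which is why the finiteness hypothesis on $\int_\tau^{\tau_*}\psi^2$ is imposed. In that case the integration in (ii) is simply taken at an interior point.

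For (iii), the scaling $\psi\mapsto A\psi$, $C\mapsto A^2C$ multiplies both the numerator and the denominator of (\ref{eq:inverse}) by $A^2$, so $c$ is unchanged. Together with (i) and (ii), this says that the map $(\psi,C)\mapsto c$ is onto the set of admissible atmospheres with potential $Q$. Modulo scaling, the pair $(\psi,C)$ parametrises them. Moreover, (ii) recovers $\psi$ up to the multiplicative constant coming from the free additive constant in $\int a$, and then recovers $C$ from $\psi$ and $c$.
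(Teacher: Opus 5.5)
Your proposal is correct and follows essentially the same route as the paper. Both arguments rest on the gauge identity $a=-2\dot\psi/\psi\Rightarrow\tfrac14a^2-\tfrac12\dot a=\ddot\psi/\psi$ and on rewriting $(\dot c+\gamma g)/c=-2\dot\psi/\psi$ as the linear equation $\frac{\rd}{\rd\tau}(\psi^2c)=-\gamma g\psi^2$, with (iii) obtained from (i), (ii) and scaling. Your remarks on regularity and on $\tau_*$ possibly being an endpoint of $I$ (e.g.\ $+\infty$ for the decaying $\psi$ used later) only make explicit what the paper leaves implicit.
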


\noindent\textit{Proof.} The gauge factor of \S\,\ref{sec:form} equals $1/\psi$ precisely when $a=-2\dot\psi/\psi$, and
then $\frac14a^2-\frac12\dot a=\ddot\psi/\psi=Q$. The relation $(\dot c+\gamma g)/c=-2\dot\psi/\psi$ is the linear equation
$(c\psi^2)\dot{}=-\gamma g\psi^2$, whose solutions are exactly (\ref{eq:inverse}); $c\in C^2$ because $\psi\in C^2$ and $\psi>0$, and
$z$ is strictly increasing because $c>0$. For (ii), $a=(\dot c+\gamma g)/c\in C^1(I)$, so $\psi\in C^2$, $\psi>0$, and
the same linear equation holds. Statement (iii) follows from (i), (ii) and the scaling invariance; positive solutions of
$\ddot\psi=Q\psi$, when they exist, form a convex cone in a two-dimensional space. \hfill$\square$

\begin{remark}
For constant $Q$, (\ref{eq:inverse}) gives in explicit parametric form the reflectionless profiles found implicitly
by \citet{PPT2012}: for $Q=0$, $\psi=1$ gives $c^2=C^2-2\gamma gz$, and $\psi=\tau$ with $C=0$ gives $c=-\gamma g\tau/3$,
$z=z_*-\gamma g\tau^2/6$, i.e.\ $c^2=\frac23\gamma g(z_*-z)$. The constant $C$ changes the atmosphere but not $Q$. A positive
zero-energy solution exists, for instance, whenever $Q\ge0$ (the solution with $\dot\psi(\tau_0)=0$ is then convex) and
for the Darboux--Crum potentials below, which covers all potentials used here.
\end{remark}

\section{Exactly transparent atmospheres}\label{sec:transparent}

Darboux--Crum transformations \citep{Crum1955} of $Q_0=\omega_a^2$ with seeds $f_j''=\kappa_j^2f_j$,
$0<\kappa_j<\omega_a$, give the reflectionless potentials $Q=\omega_a^2-2\partial_\tau^2\ln W(f_1,\dots,f_n)$ with the
positive $\psi=W(f_1,\dots,f_n,e^{-\omega_a\tau})/W(f_1,\dots,f_n)$. For $n=1$, $f_1=\cosh\kappa\tau$, $C=0$ the
quadrature is elementary: with $c_\infty=\gamma g/2\omega_a$ and up to an additive constant in $z$,
\begin{equation}
z=c_\infty\Big[\tau-\frac{1}{\omega_a+\kappa\tanh\kappa\tau}\Big],\qquad
\frac{c}{c_\infty}=1+\frac{\kappa^2\sech^2\kappa\tau}{(\omega_a+\kappa\tanh\kappa\tau)^2},
\end{equation}
an isothermal atmosphere with a warm layer, $T_{\max}/T_\infty=(1-\kappa^2/\omega_a^2)^{-2}$, transparent at all
frequencies above $\omega_a$ and supporting a trapped oscillation at $\omega^2=\omega_a^2-\kappa^2$. A two-soliton
potential ($f_1=\cosh\kappa_1\tau$, $f_2=\sinh\kappa_2(\tau-\tau_2)$, $C>0$) reproduces the U.S. Standard Atmosphere
\citep{US76} between 11 and 100\,km with an r.m.s.\ error of 3.2\,K ($T_\infty=166$\,K,
$\kappa_1/\omega_a=0.447$, $\kappa_2/\omega_a=0.625$) and a full-wave solution confirms its transparency ($|R|^2<10^{-7}$).

The middle atmosphere can thus be represented to within a few kelvin by a profile that does not reflect vertical
waves above its cut-off (4.23\,mHz), and its response to ground motion has no damped overtone near 4.4\,mHz. Below the
cut-off, with a rigid ground, it has a ground-confined mode at 3.745\,mHz that is perfectly trapped, because $Q\to\omega_a^2$
at the top (the constant $C$ heats the upper layers but leaves $Q$ unchanged), so no energy can leak. Reflectionless
potentials do support trapped modes; what the observed spectrum, a leaky fundamental and a strongly damped overtone,
requires in the present setting is a cut-off barrier of finite thickness above which the cut-off decreases, i.e.\ a
non-reflectionless barrier connecting the lower atmosphere to the thermosphere. The relevant control parameter is
not the realism of the temperature profile by itself, but the scattering structure of the acoustic cut-off.

\section{An exactly solvable model of the resonances}\label{sec:resonance}

The response of the U.S. Standard Atmosphere to vertical ground motion, with a radiation condition at the
top, peaks at 3.75 and 4.48\,mHz; replacing the lowest 50\,km by an isotherm changes these by less than 0.08\,mHz,
whereas removing the cold mesopause or the thermospheric rise shifts them by 0.1--0.2\,mHz. In terms of (\ref{eq:KG}), the cold mesosphere
produces a maximum of $Q$, a barrier for waves below the local cut-off that separates the region adjacent to
the ground from the hot thermosphere. We therefore take the Rosen--Morse potential
\begin{equation}\label{eq:RM}
Q(\tau)=q_0+q_1\tanh(\tau/T_w)+q_2\sech^2(\tau/T_w),\qquad q_2>0,
\end{equation}
with $\psi$ the solution of $\ddot\psi=Q\psi$ decaying as $\tau\to\infty$, and fit the constants, $C$ and the position of
the ground to the standard atmosphere between 0 and 200\,km. This gives $q_0=3.148\times10^{-4}$,
$q_1=-1.634\times10^{-4}$, $q_2=3.365\times10^{-4}$\,s$^{-2}$, $T_w=73.7$\,s, ground at $\tau_g=-372.9$\,s
($\psi(0)=1$, $C=0.117$ in SI units, $\tau=0$ at $z=112.2$\,km). Above 80\,km the model follows the standard
atmosphere within about 15\,K, including the mesopause and the thermospheric rise from 190 to 870\,K
(figure~\ref{fig:model}$a$); below 50\,km it is nearly isothermal (245\,K). The barrier top corresponds to
4.12\,mHz and the thermospheric plateau to 1.96\,mHz.

\begin{figure}
\centering
\includegraphics[width=0.9\textwidth]{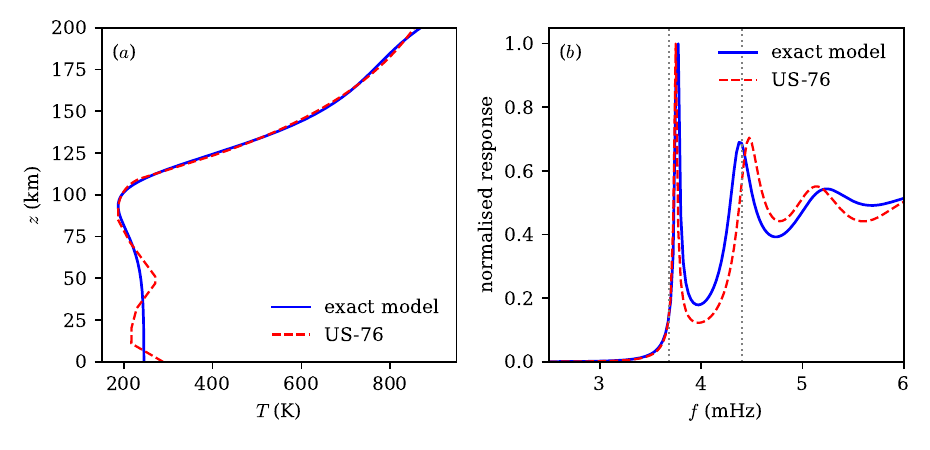}
\caption{($a$) Temperature of the exactly solvable atmosphere (\ref{eq:RM}), (\ref{eq:inverse}) and of the U.S.
Standard Atmosphere. ($b$) Response to vertical ground motion, normalised to its maximum; dotted lines mark
3.68 and 4.40\,mHz.}
\label{fig:model}
\end{figure}

With $x=\tau/T_w$, (\ref{eq:KG}) for $u=e^{-\mathrm{i}\omega t}U(x)$ becomes $U_{xx}+(A-B\tanh x+C'\sech^2x)U=0$,
$A=T_w^2(\omega^2-q_0)$, $B=T_w^2q_1$, $C'=-T_w^2q_2$, and the solution radiating upwards is
\begin{equation}\label{eq:out}
U=y^{\varpi}(1-y)^{\mu/2}\,{}_2F_1\big(\varpi+\tfrac\mu2-\sigma,\,\varpi+\tfrac\mu2+\sigma+1;\,\mu+1;\,1-y\big),
\qquad y=\tfrac12(1+\tanh x),
\end{equation}
with $\mu=-\mathrm{i}T_w(\omega^2-q_0-q_1)^{1/2}$, $\varpi=\frac12(-A-B)^{1/2}$, $\sigma(\sigma+1)=C'$. A rigid ground,
$w=0$, is $U(x_g)=0$ because $w=u/\psi$, so the resonances, i.e.\ the complex poles (quasi-normal modes), are the complex zeros of
\begin{equation}\label{eq:rescond}
F(\omega)={}_2F_1\big(\varpi+\tfrac\mu2-\sigma,\,\varpi+\tfrac\mu2+\sigma+1;\,\mu+1;\,1-y_g\big),\qquad
y_g=\tfrac12\big(1+\tanh(\tau_g/T_w)\big).
\end{equation}
The two lowest zeros, with the quality factor defined as $Q_{\rm leak}={\rm Re}\,\omega/(2|{\rm Im}\,\omega|)$, are
$f=3.763$\,mHz with $Q_{\rm leak}=173$ and $f=4.370$\,mHz with
$Q_{\rm leak}=20$; the real-frequency response from (\ref{eq:out}) coincides with the direct numerical solution and has
maxima at 3.764 and 4.386\,mHz (figure~\ref{fig:model}$b$). For the broad overtone the response maximum need not
coincide with the real part of the pole. Throughout, `resonance' denotes a complex pole, `trapped mode' a real
eigenfrequency without leakage, and `response maximum' a peak of the real-frequency response. These values may be compared with 3.68 and 4.40\,mHz and quality factors 115 and 21
from normal-mode computations for a realistic atmosphere \citep{Lognonne1998}, and with the observed
atmospheric fundamental at 3.71--3.73\,mHz with quality factor 112 \citep{Ringler2023}.

The model makes the mechanism explicit (figure~\ref{fig:mech}). The fundamental lies below the top of the barrier (4.12\,mHz): it is
a quasi-bound state between the ground and the mesospheric barrier, whose energy escapes only by tunnelling
into the thermosphere, hence its large quality factor. The overtone lies above the barrier top and is confined
only by partial reflection from the barrier and the thermospheric gradient; its quality factor of about 20
agrees with the normal-mode value of \citet{Lognonne1998}, which was obtained without this interpretation. The strong damping of the 4.4\,mHz mode, noted by \citet{Lognonne1998} and \citet{Ringler2023}, and the
weak damping of the fundamental are therefore a consequence of the two modes occupying different scattering
regimes relative to the mesospheric barrier.

\begin{figure}
\centering
\includegraphics[width=0.5\textwidth]{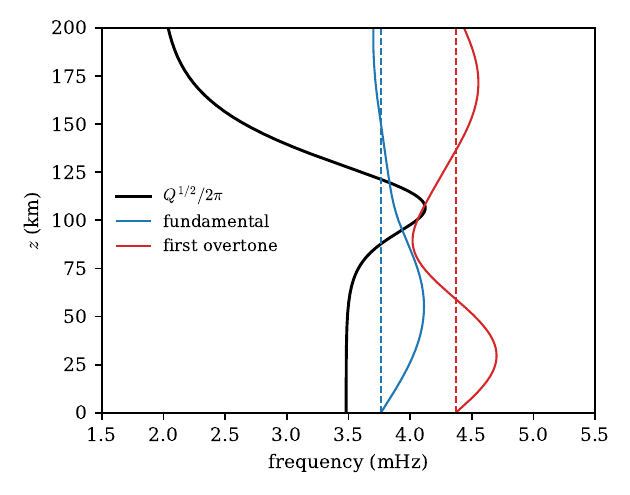}
\caption{Local cut-off frequency $Q^{1/2}/2\pi$ of the exactly solvable atmosphere as a function of height
(black), the real parts of the two lowest resonances (dashed) and the corresponding eigenfunctions $u$ (solid,
arbitrary scale, offset to their frequencies). The fundamental lies below the mesospheric barrier near
105\,km and is evanescent across it; the overtone lies above the barrier top.}
\label{fig:mech}
\end{figure}

\section{Horizontal wavenumber and molecular dissipation}\label{sec:corrections}

For the horizontal wavenumber of ${}_0S_{29}$, $k=\sqrt{l(l+1)}/R_\oplus\approx4.6\times10^{-6}$\,m$^{-1}$, the full system
(\ref{eq:lin}) for the model atmosphere shifts the real-frequency response maxima by $+0.002$ and $+0.003$\,mHz
(computed on a domain truncated at 220\,km, where the $k=0$ maxima lie at 3.763 and 4.394\,mHz), consistent with $\delta\omega\approx k^2\langle c^2(1-N^2/\omega^2)\rangle/2\omega$. The vertical
theory is therefore adequate. Weighting the classical absorption rate
$\delta=\omega^2(\frac43\mu_v+(\gamma-1)\kappa_T/c_p)/(2\rho_0c^2)$ (Sutherland's law for $\mu_v$, the standard-atmosphere law for
$\kappa_T$) with the modal energy density $E=\frac12\rho_0|w|^2+\frac12|p|^2/(\rho_0c^2)$ gives, up to 200\,km,
$Q_{\rm visc}\approx3.6\times10^3$ for the fundamental (94\,\% of its energy below 100\,km) and $4\times10^2$ for the
overtone (27\,\% above 100\,km), so that the total quality factors are 165 and 19. Leakage, not molecular dissipation, thus
controls both modes. The difference between 165 and the value 112 observed near Hunga Tonga is a limitation of
the present model; it may reflect losses not included here (energy transfer to the solid Earth, radiative
relaxation) or the horizontal spreading of the oscillation near a finite source.

\section{Event-specific atmospheres and a WKB formula}\label{sec:events}

Temperature and mean molecular mass were computed from NRLMSIS~2.1 \citep{Emmert2021,Emmert2022}, whose temperature and
major-species components are those of NRLMSIS~2.0, for the time and place
of seven events, with the observed solar and geomagnetic indices from the CelesTrak compilation (e.g.\
$F_{10.7}=200.9$, $Ap=13$ for Pinatubo; 110.2, 20 for Hunga Tonga), $c^2=\gamma RT/M(z)$ and height-dependent $g$.
The full-wave response maxima are compared with observations in table~\ref{tab:events}.

\begin{table}
\begin{center}\small
\begin{tabular}{@{}l@{\hspace{6pt}}p{3.1cm}@{\hspace{6pt}}p{6.3cm}@{}}
Event & computed (mHz) & observed (mHz)\\[2pt]
Pinatubo 1991 & 3.705, 4.395, 5.04 & 3.68, 4.44 (R) \citep{KM1992,WZ1992}\\
El Chich\'on 1982 & 3.71, 4.43, 5.06 & 3.70, 5.14 (R) \citep{WZ1992}\\
Hunga Tonga 2022 & 3.70, 4.42, 5.10 & 3.71--3.73 (P) \citep{Ringler2023}\\
Calbuco 2015 & 3.72, 4.40 & 3.7, 4.4 (I) \citep{NH2015}\\
Tohoku 2011 & 3.71, 4.38, 5.10 & 3.7--3.8, 4.35--4.48 (I) \citep{Rolland2011};\\
 & & 3.7, 4.5, 5.3 (I) \citep{Saito2011}\\
Sumatra 2004 & 3.72, 4.42, 5.09 & 3.9--4.1 (I) \citep{Choosakul2009}; 4.4--4.6 (M) \citep{Iyemori2005}\\
Kelud 2014 & 3.69, 4.39, 5.01, 6.03 & 3.7, 4.6--4.8, 6.7--6.8 (I, R) \citep{Nakashima2016}\\
\end{tabular}
\caption{Computed response maxima for event-specific NRLMSIS atmospheres and reported frequencies.
R: Rayleigh waves; P: near-field surface pressure; I: ionospheric total electron content; M: geomagnetic field.}
\label{tab:events}
\end{center}
\end{table}

The computed fundamental (3.69--3.72\,mHz) reproduces the reported 3.7--3.8\,mHz oscillation in all six events in
which such a mode was identified, for events differing in type, latitude and solar activity; for Sumatra 2004 the
TEC period of about 4\,min lies between the two lowest modes, while the geomagnetic period (4.4--4.6\,mHz) is
compatible with the overtone. The overtone agrees with Pinatubo, Calbuco and Tohoku, and the second overtone with
El Chich\'on. \citet{Rolland2011} observed that after the Tohoku earthquake the 3.7--3.8\,mHz oscillation lasted longer
than the 4.35--4.48\,mHz one, as expected for modes below and above the barrier. The 4.6--4.8 and 6.7--6.8\,mHz peaks
of Kelud are not produced by the vertical resonator; full-wave computations suggest oblique waves of horizontal
wavelength 120--230\,km reflected in the upper thermosphere, which a viscous two-dimensional model would have to
confirm. Varying $F_{10.7}$ from 80 to 280 and $Ap$ from 8 to 40 changes the frequencies by at most 0.01\,mHz: within
the tested range of NRLMSIS parameters the resonance frequencies are set by the lower and middle atmosphere, not by solar
activity. The exactly solvable atmosphere fitted to these profiles reproduces the full-wave values to within 1.5\,\%.

Since the fundamental lies below the barrier top, a WKB quantisation between the rigid ground and the lower
turning point $\tau_t$, $Q(\tau_t)=\omega^2$, applies:
\begin{equation}\label{eq:wkb}
\int_{\tau_g}^{\tau_t}\big(\omega^2-Q(\tau)\big)^{1/2}\rd\tau=\tfrac34\pi .
\end{equation}
It gives 3.763\,mHz for the standard-atmosphere model (pole 3.763\,mHz) and 3.750 and 3.752\,mHz for the NRLMSIS
fits (exact-model response maxima 3.760 and 3.765\,mHz). Replacing $Q$ in the cavity by its lower plateau $\omega_c^2$,
\begin{equation}\label{eq:simple}
f_0\approx\frac{1}{2\pi}\Big[\omega_c^2+\Big(\frac{3\pi}{4\tau_c}\Big)^2\Big]^{1/2},
\end{equation}
where $\tau_c\approx290$\,s is the travel time from the ground to the turning point in the lower mesosphere and
$\omega_c/2\pi\approx3.5$\,mHz, gives 3.72\,mHz in all cases. This is the precise form of the quarter-wave
picture of \citet{Tahira1995}: the fundamental is the cut-off of the lower atmosphere raised by the quantisation of
the cavity, and $\delta f_0/f_0\approx-0.44\,\delta T/T$ for a change of the mean temperature below the barrier.

\section{Discussion}\label{sec:discussion}

The atmospheric branches are almost flat in $k$ (from the full system, the fundamental is 3.7631, 3.7649, 3.7795
and 3.8200\,mHz at $l=0$, 29, 87 and 174), with group velocities of 15--27\,m\,s$^{-1}$ against a horizontal phase
velocity of about 5\,km\,s$^{-1}$. The selection of the excited spheroidal modes is therefore a selection by
frequency, and the free atmospheric oscillation stays within a few hundred kilometres of the source, consistent
with its detection only by nearby barometers \citep{Ringler2023}; the coupling itself would require matching
(\ref{eq:out}) to a solid-Earth solution. The rigid ground condition is accurate for land and ocean, whose surface
admittances ($\sim10^{-7}$ and $\approx2\times10^{-6}$\,m\,s$^{-1}$\,Pa$^{-1}$) are at least $10^3$ times smaller than
$1/(\rho_0c)$, but the frequencies rise by about 0.005 (fundamental) and 0.012\,mHz (overtone) per kilometre of
surface elevation. Which peak dominates a recorded spectrum (4.44\,mHz for Pinatubo, 5.14\,mHz for El Chich\'on) depends also on the
source and on the matching to solid-Earth modes, not described here.

\section{Conclusions}

For vertically propagating acoustic--gravity waves the reconstruction of the atmosphere from the
Klein--Gordon potential is explicit and is obtained by quadrature for every potential admitting a positive
zero-energy solution, so that exactly solvable potentials yield model atmospheres. A reflectionless potential gives
a smooth atmosphere that reproduces the standard middle atmosphere and is transparent above its cut-off, but it has
no damped overtone and its fundamental is perfectly trapped; within the vertical description, the observed spectrum therefore
requires a non-reflectionless cut-off barrier connecting the lower atmosphere to the thermosphere, and is controlled
by the scattering structure of the cut-off rather than by the realism of the temperature profile alone. A
Rosen--Morse barrier gives an atmosphere close to the standard one above 80\,km whose resonances satisfy a
closed-form hypergeometric equation. Its resonances at 3.76 and 4.37\,mHz, with leakage quality factors 173 and 20,
account for the very different damping of the two observed oscillations by tunnelling through, versus reflection
above, the mesospheric cut-off barrier. With event-specific atmospheres the fundamental agrees with the observations in all six
events in which a 3.7--3.8\,mHz atmospheric oscillation was identified, the longer duration of the 3.7\,mHz oscillation after the Tohoku earthquake is
consistent with this picture, the frequencies are insensitive to solar activity within the tested range of NRLMSIS parameters, and a
WKB condition gives the fundamental in terms of the cut-off of the lower atmosphere and the depth of the cavity.

\section*{Acknowledgements}
The author used Claude (Anthropic), an AI assistant, for symbolic and numerical computations, literature search and manuscript preparation; all results were checked by the author, who takes full responsibility for the content.

\section*{Funding}
The research was carried out within the state assignment of Ministry of Science and Higher Education of the Russian Federation for Federal Research Center for Information and Computational Technologies.

\section*{Declaration of interests}
The author reports no conflict of interest.

\section*{Data availability statement}
The U.S. Standard Atmosphere \citep{US76}, NRLMSIS~2.1 \citep{Emmert2021,Emmert2022} and the space-weather indices are publicly available; the scripts are available from the author on request.

\appendix
\section{Numerical methods}\label{app:numerics}

The decaying solution $\psi$ was obtained from the Riccati equation $\dot y=Q-y^2$ for $y=\dot\psi/\psi$, integrated
backwards (the stable direction) with $\Delta\tau=0.1$\,s, and (\ref{eq:inverse}) was evaluated by the trapezoidal
rule; fits used least squares on 1\,km samples. Full-wave solutions used the regular variables
$(\rho_0^{1/2}w,\rho_0^{-1/2}p)$, an eighth-order Runge--Kutta method (relative tolerance $10^{-9}$) and, at the top
(480\,km), the eigenvector of the frozen-coefficient system with upward flux or decay; energy-flux conservation holds
to $10^{-6}$. Complex zeros of (\ref{eq:rescond}) were located by Muller iteration with arbitrary-precision ${}_2F_1$
($|F|<10^{-21}$); (\ref{eq:out}) satisfies its equation to $10^{-30}$.

\end{document}